\DeclareSymbolFont{AMSb}{U}{msb}{m}{n}
\DeclareSymbolFontAlphabet{\Bbb}{AMSb}
\def\hb@xt@{\hbox to }
\let\oldendproof\endproof
\def\endproof{\qed\oldendproof}
\begin{document}
\title{Finding Large Clique Minors is Hard} 

\author{David Eppstein}

\institute{Computer Science Department\\
University of California, Irvine\\
\email{eppstein@uci.edu}}

\maketitle   

\begin{abstract}
We prove that it is NP-complete, given a graph $G$ and a parameter $h$, to determine whether $G$ contains a complete graph $K_h$ as a minor.
\end{abstract}

\section{Introduction}

The \emph{Hadwiger number} of a graph $G$ is the number of vertices in the largest clique that is a \emph{minor} of $G$; that is, that can be formed by contracting some edges and deleting others. Equivalently, it is the largest number of vertex-disjoint connected subgraphs that one can find in $G$ such that for each two subgraphs $S_i$ and $S_j$ there is an edge $v_iv_j$ in $G$ with $v_i\in S_i$ and $v_j\in S_j$. In 1943, Hugo Hadwiger conjectured that in any graph the Hadwiger number is greater than or equal to the chromatic number~\cite{Had-VNZ-43}, and this important conjecture remains open in general, although it is known to be true when the chromatic number is at most six~\cite{RobSeyTho-C-93}. The Hadwiger number is also closely associated with the sparseness of the given graph: if $G$ has Hadwiger number $h$, every subgraph of $G$ has a vertex with degree $O(h\sqrt{\log h})$. It follows from this fact that, if $G$ has $n$ vertices, it has $O(nh\sqrt{\log h})$ edges~\cite{Kos-C-94}.

Given its graph-theoretic importance, it is natural to ask for the computational complexity of the Hadwiger number. In this light, Alon et al.~\cite{AloLinWah-TCS-07} observe that the Hadwiger number is fixed parameter tractable: for any constant $h$, there is a polynomial-time algorithm that either computes the Hadwiger number or determines that it is greater than $h$, and the exponent in the polynomial time bound of this algorithm is independent of $h$, due to standard results in graph minor theory. However, this is not a polynomial time algorithm for the Hadwiger number problem because its running time includes a factor exponential or worse in $h$. In addition, as Alon et al. show, the Hadwiger number may be approximated in polynomial time more accurately than the problem of finding the largest clique subgraph of a given graph: they provide a polynomial time approximation algorithm for the Hadwiger number of an $n$-vertex graph with approximation ratio $O(\sqrt n)$, whereas it is NP-hard to approximate the clique number to within a factor better than $n^{1-\epsilon}$ for any $\epsilon>0$~\cite{Zuc-STOC-06}.

To classify the problem of computing the Hadwiger number in complexity theoretic terms, we need to consider a decision version of the problem: given a graph $G$, and a number $h$, is the Hadwiger number of $G$ greater than or equal to $h$?\footnote{Chandran and Sivadasan~\cite{ChaSiv-DM-07} formulate a different decision problem, in which the positive instances are those with small Hadwiger number, but this is in CoNP rather than being in NP.} We call this decision problem the \emph{Hadwiger number problem}. Unsurprisingly, it turns out to be NP-complete. A statement of its NP-hardness was made without proof by Chandran and Sivadasan~\cite{ChaSiv-DM-07}. However, we have been unable to find a clear proof of the NP-completeness of the Hadwiger number problem in the literature. Our goal in this short paper is to fill this gap by providing an NP-completeness proof of the standard type (a polynomial-time many-one reduction from a known NP-complete problem) for the Hadwiger number problem.

\section{Reduction from domatic number}

\begin{figure}[t]
\centering\includegraphics[width=5.5in]{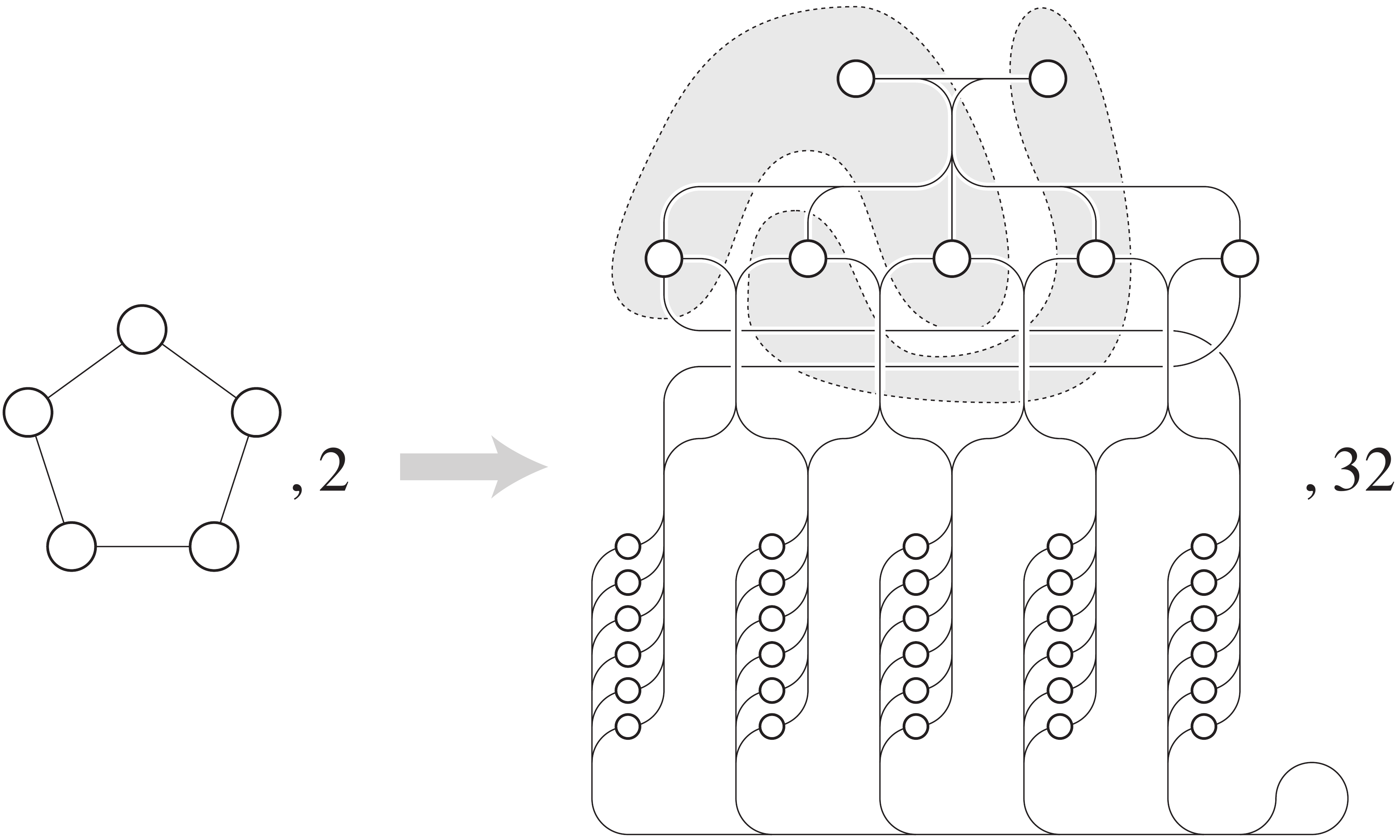}
\caption{A confluent drawing~\cite{DicEppGoo-GD-03} of our NP-completeness reduction. Two vertices are connected by an edge in $G'$ if and only if there is a smooth (possibly self-intersecting) path between the circles representing them in the drawing. In this example, a 5-cycle with domatic number~2 is transformed into a 37-vertex graph with Hadwiger number 32. One possible 32-vertex clique minor is formed by contracting the two shaded sets of vertices into single supervertices, removing the remaining middle-layer vertex, and combining the two supervertices with the 30 bottom-layer vertices.}
\label{fig:redux}
\end{figure}

Recall that a vertex $v$ \emph{dominates} a vertex $w$ if $v=w$ or $v$ and $w$ are adjacent; a dominating set of a graph $G$ is a set of vertices such that, for every vertex $w$ in $G$, some member of the set dominates $w$. The \emph{domatic number} of a graph $G$ is the maximum number of disjoint dominating sets that can be found in $G$~\cite{CocHed-Nw-77}. In the \emph{domatic number problem}, we are given a graph $G$ and a number $d$, and asked to determine whether the domatic number of $G$ is at least $d$; that is, whether $G$ contains at least $d$ disjoint dominating sets. This problem is known to be NP-complete even for $d=3$: that is, it is difficult to determine whether the vertices of a given graph may be partitioned into three dominating sets~\cite{GarJoh-79}.

We begin by describing a polynomial-time many-one reduction from an instance $(G,d)$ of the domatic number problem into an instance $(G',h)$ of the Hadwiger number problem. We may assume without loss of generality that no vertex of $G$ is adjacent to all others, for if $v$ is such a vertex we may take one of the dominating sets to be the one-vertex set $\{v\}$, simplify the problem in polynomial time by deleting $v$ from $G$ and subtracting one from $d$, and use the remaining simplified problem as the basis for the transformation. As we will show, with this assumption, the instance $(G,d)$ may be translated in polynomial time to an equivalent instance $(G',h)$ of the Hadwiger number problem.
We also assume that the vertices of $G$ are numbered arbitrarily as $v_i$ for $1\le i\le n$.
To perform this translation, construct $G'$ in three layers:
\begin{itemize}
\item The top layer is a $d$-vertex clique with vertices $t_i$ for $1\le i\le d$.
\item The middle layer is an $n$-vertex independent set with vertices $m_i$ for $1\le i\le n$.
\item The bottom layer is an $n(n+1)$-vertex clique with vertices $b_{i,j}$ for $1\le i\le n$ and $1\le j\le n+1$.
\item Every pair of one top and one middle vertex is connected by an edge.
\item Middle vertex $m_i$ and bottom vertex $b_{j,k}$ are connected by an edge if and only if either $i=j$ or $G$ has an edge $v_iv_j$. That is, there is an edge from $m_i$ to $b_{j,k}$ if and only if $v_i$ dominates $v_j$.
\end{itemize}
We let $h = n(n+1)+d$. This reduction is illustrated in Figure~\ref{fig:redux}.

\begin{lemma}
\label{lem:connectivity}
Let $G'$ be as constructed above, and let $S$ be a connected nonempty subset of the vertices of $G'$. Then at least one of the following three possibilities is true:
\begin{enumerate}
\item $S$ consists only of a single middle vertex.
\item $S$ contains a top vertex.
\item $S$ contains a bottom vertex.
\end{enumerate}
\end{lemma}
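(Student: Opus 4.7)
My plan is to prove the statement by contraposition: assume that possibilities (2) and (3) both fail, and deduce that (1) must hold. So suppose $S$ is a connected nonempty subset of $V(G')$ containing neither a top vertex nor a bottom vertex. Then $S$ is a nonempty subset of the middle layer.

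Now I would invoke the construction of $G'$: the middle layer is defined to be an $n$-vertex independent set, so no two middle vertices are adjacent in $G'$. Since $S$ is a connected subset of $G'[M]$ where $M$ is the middle layer, and the induced subgraph $G'[M]$ has no edges at all, the only connected nonempty subsets of $M$ are singletons. Hence $S$ consists of exactly one middle vertex, which is precisely possibility (1).

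There is essentially no obstacle here; the statement is really a direct consequence of two definitional facts about $G'$, namely that the middle layer is an independent set and that every edge of $G'$ with at least one endpoint in the middle layer has its other endpoint in the top or bottom layer. The brevity is deliberate: this lemma is a bookkeeping tool to restrict the possible shapes of the branch sets of any clique minor, and the real work will come later when these three cases are combined with counting arguments to relate the Hadwiger number of $G'$ to the domatic number of $G$.
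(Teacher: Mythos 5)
Your proposal is correct and uses essentially the same argument as the paper: if $S$ avoids the top and bottom layers it lies in the middle layer, which is an independent set, so connectedness forces $S$ to be a single middle vertex. The paper phrases this as a case split on whether $S$ consists only of middle vertices rather than as a contraposition, but the content is identical.
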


\begin{proof}
If $S$ consists only of middle vertices, it can have only one of them, because the middle vertices form an independent set. If on the other hand $S$ does not consist only of middle vertices, it must contain a top vertex or a bottom vertex.
\end{proof}

\begin{lemma}
\label{lem:degree}
Let vertex $v_i$ in $G$ have degree $d_i$. Then middle vertex $m_i$ in $G'$ has degree $(d_i+1)(n+1)+d$.
\end{lemma}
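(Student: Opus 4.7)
The plan is simply to count the neighbors of $m_i$ in each of the three layers of $G'$, using the adjacency rules from the construction.

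First I would handle the top layer: by construction every top vertex is joined to every middle vertex, so $m_i$ has exactly $d$ neighbors among the $t_j$. Next, the middle layer contributes nothing, since the middle layer is an independent set and so $m_i$ has no middle-vertex neighbors. The main part is the bottom layer: $m_i$ is adjacent to $b_{j,k}$ exactly when $v_i$ dominates $v_j$, i.e.\ when $j=i$ or $v_iv_j$ is an edge of $G$. The set of such indices $j$ has size $1+d_i$ (the vertex $v_i$ itself together with its $d_i$ neighbors in $G$), and for each such $j$ there are $n+1$ choices of $k$, giving $(d_i+1)(n+1)$ bottom neighbors.

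Adding the three contributions yields $\deg(m_i)=(d_i+1)(n+1)+d$, as claimed. There is no real obstacle here; the only thing to be slightly careful about is not double-counting $b_{i,k}$ (the case $j=i$) as both a ``self-domination'' neighbor and a ``$G$-edge'' neighbor, which is handled by the disjunction $i=j$ or $v_iv_j\in E(G)$ in the construction (and is automatic since $G$ has no self-loops).
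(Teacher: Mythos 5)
Your proposal is correct and follows essentially the same argument as the paper: count the $d$ top-layer neighbors, note the middle layer contributes nothing, and count $(d_i+1)(n+1)$ bottom neighbors by splitting into the case $j=i$ and the $d_i$ neighbors of $v_i$ in $G$. The remark about avoiding double-counting is a fine extra touch but not needed beyond what the paper already does.
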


\begin{proof}
Vertex $m_i$ is connected to $(d_i+1)(n+1)$ bottom vertices: the $n+1$ vertices $b_{i,j}$ (for $1\le j\le n+1$) and the $d_i(n+1)$ vertices $b_{i',j}$ (for $1\le j\le n+1$) such that $v_i$ and $v_i'$ are neighbors.
In addition it is connected to all $d$ top vertices.
\end{proof}

\begin{lemma}
\label{lem:topbot}
Let $G'$ be as constructed above, suppose that $G'$ has Hadwiger number at least $h$, and let $S_i$ ($1\le i\le h$) be a family of disjoint mutually-adjacent subgraphs forming an $h$-vertex clique minor in $G'$. Then each subgraph $S_i$ has exactly one non-middle vertex and each non-middle vertex belongs to exactly one subgraph $S_i$.
\end{lemma}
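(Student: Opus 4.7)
The plan is to combine the two preceding lemmas with the hypothesis that no vertex of $G$ is adjacent to all others. Since there are exactly $h=n(n+1)+d$ non-middle vertices in $G'$ and exactly $h$ subgraphs $S_i$, once I show that every $S_i$ contains at least one non-middle vertex, a straightforward pigeonhole argument forces both conclusions: each $S_i$ contains exactly one non-middle vertex, and every non-middle vertex appears in some $S_i$. So the only real work is to rule out the first case of Lemma~\ref{lem:connectivity}, namely that some $S_i$ consists of a single middle vertex.

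Suppose for contradiction that $S_i = \{m_j\}$ for some $j$. Since the $S_i$ form an $h$-clique minor, $m_j$ must have a neighbor in each of the other $h-1$ subgraphs, so its degree in $G'$ satisfies $\deg_{G'}(m_j) \ge h-1 = n(n+1)+d-1$. By Lemma~\ref{lem:degree}, $\deg_{G'}(m_j) = (d_j+1)(n+1)+d$, where $d_j$ is the degree of $v_j$ in $G$. Combining these gives
\[
(d_j+1)(n+1) \;\ge\; n(n+1)-1,
\]
i.e.\ $d_j+1 \ge n - \frac{1}{n+1}$, which because $d_j$ is an integer forces $d_j \ge n-1$. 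But the reduction assumes no vertex of $G$ is adjacent to all others, so $d_j \le n-2$, a contradiction.

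Therefore no $S_i$ falls into the first case of Lemma~\ref{lem:connectivity}, and every $S_i$ contains a top or bottom vertex. The pigeonhole step then finishes the proof, as sketched above.

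The only real obstacle is the arithmetic in the degree bound: one has to be careful that the inequality $(d_j+1)(n+1) \ge n(n+1)-1$ does in fact round up to the integer inequality $d_j \ge n-1$, and that this is genuinely incompatible with the no-universal-vertex assumption (rather than just barely compatible). Once this is checked, the rest of the argument is immediate.
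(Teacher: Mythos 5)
Your proof is correct and follows essentially the same route as the paper: rule out single-middle-vertex subgraphs by comparing the degree bound of Lemma~\ref{lem:degree} (under the no-universal-vertex assumption) with the $h-1$ neighbors such a subgraph would need, then finish by pigeonhole on the $h$ non-middle vertices. The only cosmetic difference is that you run the degree argument as a contradiction yielding $d_j\ge n-1$, whereas the paper states the same inequality directly as an upper bound $(n-1)(n+1)+d<h-1$ on middle-vertex degrees; the arithmetic in both versions checks out.
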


\begin{proof}
A middle vertex in $G'$ has degree at most $(n-1)(n+1)+d<h-1$ by Lemma~\ref{lem:degree} and by the assumption that $G$ has no vertex that is adjacent to all other vertices. If there were a subgraph $S_i$ consisting only of a single middle vertex, it would not have enough neighbors to be adjacent to all $h-1$ of the other subgraphs, so we may infer that such subgraphs do not exist and apply Lemma~\ref{lem:connectivity} to conclude that each subgraph $S_i$ contains at least one non-middle vertex. But there are $h$ subgraphs, and $h$ non-middle vertices, so each subgraph $S_i$ must contain exactly one non-middle vertex and each such vertex must belong to one of these subgraphs.
\end{proof}

\begin{lemma}
\label{lem:isolani}
Let $G'$ be as constructed above, suppose that $G'$ has Hadwiger number at least $h$, and let $S_i$ ($1\le i\le h$) be a family of disjoint mutually-adjacent subgraphs forming an $h$-vertex clique minor in $G'$. Then, for each $i$, there is a bottom vertex $b_{i,j}$ that forms a single-vertex subgraph in the family.
\end{lemma}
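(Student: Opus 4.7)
The plan is to argue by a pigeonhole count on each row of the bottom layer. Fix a row index $i \in \{1,\ldots,n\}$ and consider the $n+1$ bottom vertices $b_{i,1},\ldots,b_{i,n+1}$. I would first apply Lemma~\ref{lem:topbot} to observe that each such vertex is the unique non-middle vertex of some subgraph in the family, and moreover these subgraphs are pairwise distinct because distinct non-middle vertices belong to distinct $S_k$'s. So row $i$ gives rise to $n+1$ distinct subgraphs, each of which consists of one bottom vertex from row $i$ together with (possibly) some middle vertices.

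Next I would count middle vertices globally. There are only $n$ middle vertices in $G'$ in total, and the $S_k$'s are vertex-disjoint, so at most $n$ of the $n+1$ subgraphs identified above can contain any middle vertex. By pigeonhole, at least one of them must contain no middle vertex at all; by Lemma~\ref{lem:topbot} its only non-middle vertex is the corresponding $b_{i,j}$, so this subgraph is exactly the single-vertex set $\{b_{i,j}\}$, which is what the lemma asserts.

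There is no real obstacle here — the whole argument rests on the gap $n+1 > n$ that was built into the construction precisely for this purpose (the $n+1$ copies per row in the bottom clique). The only point worth noting, rather than a difficulty, is that a middle vertex $m_\ell$ might end up in a subgraph whose non-middle vertex is a top vertex instead of a bottom vertex; this only strengthens the count, since it further reduces the pool of middle vertices that can be spent on row $i$. After the pigeonhole step no additional case analysis is required.
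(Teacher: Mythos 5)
Your proof is correct and follows essentially the same route as the paper: fix row $i$, note via Lemma~\ref{lem:topbot} that its $n+1$ bottom vertices lie in $n+1$ distinct subgraphs, and use the fact that only $n$ middle vertices exist to conclude by pigeonhole that one of these subgraphs contains no middle vertex and is therefore a singleton. No gaps.
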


\begin{proof}
By Lemma~\ref{lem:topbot}, each set that contains more than one vertex contains a middle vertex.
But there are only $n$ middle vertices, so at most $n$ disjoint sets can contain middle vertices.
Since there are $n+1$ bottom vertices $b_{i,j}$, and (by Lemma~\ref{lem:topbot} again) each belongs to a different subgraph, one of the $n+1$ subgraphs containing these bottom vertices must have no middle vertices. Since it contains only one non-middle vertex, it must form a single-vertex subgraph. 
\end{proof}

\begin{lemma}
\label{lem:dom}
Let $G'$ be as constructed above, suppose that $G'$ has Hadwiger number at least $h$, let $S_i$ ($1\le i\le h$) be a family of disjoint mutually-adjacent subgraphs forming an $h$-vertex clique minor in $G'$, and let $t_i$ be a top vertex belonging to set $S_i$. Then the set $D_i=\{v_j\mid m_j\in S_i\}$ is a dominating set in $G$.
\end{lemma}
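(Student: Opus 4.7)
The plan is to pick an arbitrary vertex $v_k$ of $G$ and exhibit a member of $D_i$ that dominates it. To do this, I will use the singleton bottom subgraphs produced by Lemma~\ref{lem:isolani} as ``test vertices'' that force $S_i$ to reach into the right part of the middle layer.

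Concretely, fix any $k \in \{1,\ldots,n\}$. By Lemma~\ref{lem:isolani} applied with index $k$, there is some $j_k$ such that $\{b_{k,j_k}\}$ appears as a single-vertex set in the family $S_1,\ldots,S_h$. Since the family forms a clique minor, $S_i$ must be adjacent to $\{b_{k,j_k}\}$ in $G'$, so some vertex $u \in S_i$ has an edge to $b_{k,j_k}$. By Lemma~\ref{lem:topbot}, $S_i$ consists of the single non-middle vertex $t_i$ together with (possibly) some middle vertices; but by construction top vertices have no edges to bottom vertices, so $u \neq t_i$. Hence $u = m_j$ for some $j$ with $m_j \in S_i$, i.e.\ $v_j \in D_i$.

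Next I invoke the construction of the edges between the middle and bottom layers: an edge $m_j b_{k,j_k}$ exists in $G'$ precisely when $j = k$ or $v_j v_k$ is an edge of $G$, i.e.\ precisely when $v_j$ dominates $v_k$. Therefore the vertex $v_j \in D_i$ dominates $v_k$, and since $v_k$ was arbitrary, $D_i$ is a dominating set of $G$.

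There is essentially no obstacle: the whole argument is just unpacking the adjacency structure of $G'$, with Lemmas~\ref{lem:topbot} and~\ref{lem:isolani} doing all the heavy lifting. The only thing to be a little careful about is the implicit nonemptiness of $D_i$, but this falls out for free, because the reasoning above produces a witness $m_j \in S_i$ for every choice of $k$, ruling out the degenerate case $S_i = \{t_i\}$ automatically.
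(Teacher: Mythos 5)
Your proof is correct and follows the paper's argument essentially verbatim: for an arbitrary $v_k$, use Lemma~\ref{lem:isolani} to get a singleton bottom subgraph $\{b_{k,j_k}\}$, use Lemma~\ref{lem:topbot} plus the absence of top--bottom edges to force the witness of adjacency in $S_i$ to be a middle vertex $m_j$, and read off from the construction that $v_j$ dominates $v_k$. No differences worth noting.
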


\begin{proof}
Let $v_k$ be any vertex in $G$, and let $b_{k,k'}$ be a bottom vertex in $G'$ that forms a single-vertex subgraph in the family of disjoint subgraphs; $b_{k,k'}$ is guaranteed to exist by Lemma~\ref{lem:isolani}. Then $S_i$ must contain a vertex adjacent to $b_{k,k'}$; this vertex must be a middle vertex $m_j$ of $G'$, because top vertices are not adjacent to bottom vertices and $S_i$ contains top vertex $t_i$ as its only non-middle vertex. In order for middle vertex $m_j$ to be adjacent to bottom vertex $b_{k,k'}$, the vertex $v_j$ in $G$ that corresponds to $m_j$ must dominate $v_k$. Thus, for every vertex $v_k$ in $G$, there is a vertex $v_j$ in $D_i$ that dominates $v_k$; therefore, $D_i$ is a dominating set.
\end{proof}

\begin{lemma}
\label{lem:redux-correctness}
Let $(G,d)$ be given and $(G',h)$ be as constructed above. Then $G$ has domatic number at least $d$ if and only if $G'$ has Hadwiger number at least $h$.
\end{lemma}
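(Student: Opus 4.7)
The plan is to prove the two directions of the biconditional separately, with the reverse direction essentially already handled by the preceding lemmas.

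For the forward direction (domatic number at least $d$ implies Hadwiger number at least $h$), I would take disjoint dominating sets $D_1,\ldots,D_d$ in $G$ and explicitly exhibit $h$ disjoint connected mutually-adjacent subgraphs of $G'$. Specifically, for each $i$ with $1 \le i \le d$, I would let $S_i = \{t_i\} \cup \{m_j : v_j \in D_i\}$, which is connected because every middle vertex is adjacent to $t_i$. For each pair $(i,j)$ with $1 \le i \le n$ and $1 \le j \le n+1$, I would use the single-vertex subgraph $\{b_{i,j}\}$. This gives $d + n(n+1) = h$ subgraphs, pairwise disjoint because the $D_i$ are disjoint and the bottom vertices are distinct. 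I would then verify mutual adjacency in three cases: any two $S_i$, $S_{i'}$ share the edge $t_it_{i'}$ from the top clique; any two bottom singletons are adjacent via the bottom clique; and for any $S_i$ and any bottom singleton $\{b_{k,k'}\}$, the dominating property of $D_i$ furnishes some $v_j \in D_i$ with $v_j$ dominating $v_k$, hence an edge $m_jb_{k,k'}$ in $G'$.

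For the reverse direction, I would assume $G'$ has Hadwiger number at least $h$ and take a family of $h$ disjoint mutually-adjacent connected subgraphs $S_1,\ldots,S_h$. By Lemma~\ref{lem:topbot}, each non-middle vertex lies in exactly one $S_i$, so (after relabeling) exactly $d$ of these subgraphs contain a top vertex, say $t_i \in S_i$ for $1 \le i \le d$. By Lemma~\ref{lem:dom}, each $D_i = \{v_j : m_j \in S_i\}$ is a dominating set in $G$, and these sets are disjoint because the $S_i$ are disjoint. Hence $G$ has $d$ disjoint dominating sets.

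The main conceptual work has already been done in the preceding lemmas, so I do not expect a serious obstacle here. The only subtlety worth flagging is that the domatic sets $D_i$ are not required to cover all of $V(G)$, which is fine: any middle vertex $m_j$ with $v_j \notin \bigcup_i D_i$ simply does not appear in any subgraph of the constructed clique minor, and this does not interfere with either disjointness or adjacency. The verification of adjacency in the forward direction, while routine, is the one place where the dominating set hypothesis is actually used.
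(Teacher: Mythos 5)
Your proposal is correct and follows essentially the same route as the paper: the same explicit construction of bottom-vertex singletons plus top-vertex-rooted subgraphs for the forward direction, and an appeal to Lemmas~\ref{lem:topbot} and~\ref{lem:dom} (with disjointness of the $D_i$ inherited from disjointness of the $S_i$) for the converse. Your extra case analysis verifying mutual adjacency, and the remark that the $D_i$ need not cover $V(G)$, are just slightly more detailed write-ups of steps the paper leaves implicit.
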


\begin{proof}
First, suppose that $G$ has domatic number at least $d$; we must show that in this case the Hadwiger number is at least $h$. We can form a family of mutually-adjacent connected subgraphs $S_i$ in $G'$, as follows: for each bottom vertex $b_{j,k}$ form a subgraph consisting of that single vertex, and for each dominating set $D_i$ form a subgraph $S_i$ consisting of a single top vertex together with the middle vertices in $G'$ that correspond to vertices in $D_i$. There are $n(n+1)$ bottom vertices, and $d$ sets containing a top vertex, so these subgraphs form a clique minor with $n(n+1)+d=h$ vertices as desired.

Conversely, suppose that $G'$ has Hadwiger number at least $h=n(n+1)+d$; that is, that it has this many disjoint mutually-adjacent connected subgraphs $S_i$; we must show that, in this case, $G$ has domatic number at least $d$. Each subgraph $S_i$ must include exactly one top or bottom vertex by Lemma~\ref{lem:topbot}, together with possibly some middle vertices. For each top vertex $t_i$, the set $D_i$ is a dominating set in $G$ by Lemma~\ref{lem:dom}; these sets are disjoint because they correspond to the disjoint partition of the middle vertices in $G'$ given by the subgraphs $S_i$.
Thus, we have found $d$ disjoint dominating sets $D_i$ in $G$, so $G$ has domatic number at least $d$.
\end{proof}

\begin{theorem}
The Hadwiger number problem is NP-complete.
\end{theorem}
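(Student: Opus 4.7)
The plan is to package the work already done in Lemma~\ref{lem:redux-correctness} into the standard two-part template for NP-completeness: membership in NP, plus a polynomial-time many-one reduction from a known NP-complete problem.

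For membership in NP, I would exhibit as certificate a partition of a subset of $V(G)$ into $h$ vertex-disjoint subsets $S_1,\ldots,S_h$. A verifier can, in polynomial time, (i) check that each induced subgraph $G[S_i]$ is connected (e.g., by breadth-first search) and (ii) check that for every pair $i\ne j$ there is at least one edge with one endpoint in $S_i$ and the other in $S_j$. If both conditions hold, the $S_i$ witness a $K_h$-minor.

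For hardness, I would reduce from the domatic number problem, which is NP-complete even when $d=3$ by the result of Garey and Johnson cited above. Given an instance $(G,d)$, first apply the polynomial-time preprocessing described before Lemma~\ref{lem:connectivity}: while $G$ contains a vertex adjacent to all others, delete it and decrement $d$. This leaves an equivalent instance satisfying the hypothesis under which our construction is correct. Then produce the graph $G'$ and parameter $h=n(n+1)+d$ exactly as specified in Section~2. The graph $G'$ has $d+n+n(n+1)$ vertices and $O(n^2+nd+|E(G)|\cdot n)$ edges, and is clearly constructible in polynomial time from $(G,d)$.

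Finally, Lemma~\ref{lem:redux-correctness} states precisely that $(G,d)$ is a yes-instance of domatic number if and only if $(G',h)$ is a yes-instance of the Hadwiger number problem, which completes the reduction and hence the proof. There is really no obstacle left at this stage: all of the graph-theoretic content lives in Lemmas~\ref{lem:connectivity}--\ref{lem:dom}, which together drive the forward and reverse directions of Lemma~\ref{lem:redux-correctness}; the theorem itself is just the formal assembly of these pieces together with the easy NP certificate.
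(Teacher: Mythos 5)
Your proposal is correct and follows essentially the same route as the paper: polynomial-time constructibility of $(G',h)$, correctness via Lemma~\ref{lem:redux-correctness}, NP-hardness of domatic number, and an easy NP certificate consisting of the disjoint connected, pairwise-adjacent vertex sets. The only difference is that you spell out the NP membership check and the preprocessing step a bit more explicitly than the paper does, which is harmless.
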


\begin{proof}
The construction of $(G',h)$ from $(G,d)$ may easily be implemented in polynomial time, and by Lemma~\ref{lem:redux-correctness} it forms a valid polynomial-time many-one reduction from the domatic number problem to the Hadwiger number problem. This reduction (together with the known fact that domatic number is NP-complete and the easy observation that the Hadwiger number problem is in NP) completes the proof of NP-completeness.
\end{proof}

\section{Alternative reduction from disjoint paths}

Seymour~\cite{Seymour} has suggested that it should be straightforward to prove NP-completeness of the Hadwiger number problem via an alternative reduction, from disjoint paths. We briefly outline such a reduction here. In the vertex-disjoint paths problem~\cite{RobSey-JCTB-95}, the input consists of a graph $G$ and a collection of pairs of vertices $(s_i,t_i)$ in $G$; the output should be positive if there exists a collection of vertex-disjoint paths in $G$ having each pair of terminals as endpoints, and negative otherwise. Although the vertex-disjoint paths problem is fixed-parameter tractable with the number of terminal pairs as parameter, it is NP-complete when this number may be arbitrarily large, even when $G$ is cubic and planar~\cite{MidPfe-C-93}.

An instance of this problem may be reduced to the Hadwiger number problem as follows. Let $n$ be the number of vertices in $G$, and $k$ be the number of terminal pairs in the instance. Let $K$ be an $(n+1)$-clique from which $k$ non-adjacent edges $u_iv_i$ have been removed, and form a new graph $G'$ with $2n+1-2k$ vertices as a union of $G$ and $K$ in which $u_i$ is identified with $s_i$ and $v_i$ is identified with $t_i$. Then, a positive solution to the disjoint paths problem in $G$ leads to the existence of an $(n+1)$-vertex clique minor in $G'$, by using the paths to replace each missing edge. Conversely, if $G'$ has an $(n+1)$-vertex clique minor, corresponding to a collection of $n+1$ disjoint connected and pairwise adjacent subgraphs of $G'$, then each of these subgraphs must contain exactly one vertex of $K$ (for any set of vertices of $G'\setminus K$ has at most $n-1$ neighbors), and the adjacency between the two subgraphs containing $u_i$ and $v_i$ can be used to find a path in $G$ connecting $s_i$ and $t_i$ that uses only vertices drawn from these two subgraphs. Therefore, the given vertex-disjoint paths problem instance $(G,k)$ is a positive instance if and only if $(G',n+1)$ is a positive instance of the Hadwiger number problem.

\section{Conclusions}

We have shown that the Hadwiger number problem is NP-complete. It is natural to ask whether the problem is also hard to approximate. Very strong inapproximability results are known for the superficially similar maximum clique problem~\cite{Has-AM-99}.  Alon et al.~\cite{AloLinWah-TCS-07} have provided upper bounds that show that such strong results cannot be true for the Hadwiger number, but they do not rule out the possibility of weaker inapproximability results. In response to an earlier version of the results presented here, Wahlen~\cite{Wah-TCS-09} has provided a preliminary result of this type: unless P${}={}$NP, there can be no polynomial-time approximation scheme for the Hadwiger number. However, there still remains a large gap between this lower bound and the upper bound of $O(\sqrt n)$ on the approximation ratio provided by Alon et~al.

\bibliographystyle{abbrv}
\bibliography{hh}
\end{document}